\documentclass[11pt,a4paper]{article}
\usepackage{amsmath,amsfonts,amstext, amsthm, tikz, xspace, cite, todonotes}
\author{Marie-Louise Bruner, Martin Lackner}
\title{A W[1]-Completeness Result for Generalized Permutation Pattern Matching}
\date{\today}

\textwidth 15cm
\oddsidemargin -0.6cm
\evensidemargin -0.6cm

\theoremstyle{plain}
\newtheorem{Thm} {Theorem} [section]

\newtheorem{Def} [Thm]{Definition}

\newtheorem{Cla} {Claim}
\theoremstyle{definition}
\newtheorem{Ex} [Thm]{Example}
\newtheorem*{Ex*}{Example}
\theoremstyle{remark}
\newtheorem{Rem} [Thm]{Remark}

\newenvironment{changemargin}[2]{%
\begin{list}{}{%
\setlength{\topsep}{0pt}%
\setlength{\leftmargin}{#1}%
\setlength{\rightmargin}{#2}%
\setlength{\listparindent}{\parindent}%
\setlength{\itemindent}{\parindent}%
\setlength{\parsep}{\parskip}%
}%
\item[]}{\end{list}}

\newenvironment{Exe*}{\begin{changemargin}{1cm}{0cm}\begin{Ex*}}{\end{Ex*}\end{changemargin}}

\newcommand{\cprob}[3]{
    \begin{center}
      \fbox{
        \parbox{0.9\textwidth}{
          #1\smallskip\\
          \begin{tabular}{rp{0.75\textwidth}}
            \textit{Instance:\ } & #2\\
            \textit{Question:\ } & #3
          \end{tabular}
        }
      }
    \end{center}
}

\newcommand{\pprob}[4]{
    \begin{center}
      \fbox{
        \parbox{0.9\textwidth}{
          #1\smallskip\\
          \begin{tabular}{rp{0.70\textwidth}}
            \textit{Instance:\ } & #2\\
            \textit{Parameter:\ } & #3\\
            \textit{Question:\ } & #4
          \end{tabular}
        }
      }
    \end{center}
}


\newcommand{\pe}{permutation}
\newcommand{\mul}{multiset}
\newcommand{\pea}{permutation }

\newcommand {\cA} {{\mathcal A}}

\newcommand {\bigO} {{\mathcal O}}

\newcommand {\N} {{\mathbb N}}

\newcommand{\ccfont}[1]{\textsf{#1}}
\newcommand{\probfont}[1]{\textsc{#1}}

\newcommand{\ppm}{\probfont{PPM}\xspace}
\newcommand{\gppm}{\probfont{GPPM}\xspace}

\newcommand{\fpt}{\textnormal{\ccfont{FPT}}\xspace}
\newcommand{\xp}{\textnormal{\ccfont{XP}}\xspace}
\newcommand{\w}[1]{\ifmmode{\textnormal{\ccfont{W}\,[#1]}}\else{\textnormal{\ccfont{W}[#1]}}\fi}

\newcommand{\card}[1]{\ensuremath{|#1|}}

\newcommand{\ra}{\rightarrow}

\newcommand{\NP}{\ccfont{NP}\xspace}

\makeatletter
\def\underbracket{%
\@ifnextchar[{\@underbracket}{\@underbracket [\@bracketheight]}%
}
\def\@underbracket[#1]{%
\@ifnextchar[{\@under@bracket[#1]}{\@under@bracket[#1][0.4em]}%
}
\def\@under@bracket[#1][#2]#3{
\mathop{\vtop{\m@th \ialign {##\crcr $\hfil \displaystyle {#3}\hfil $%
\crcr \noalign {\kern 3\p@ \nointerlineskip }\upbracketfill {#1}{#2}
\crcr \noalign {\kern 3\p@ }}}}\limits}
\def\upbracketfill#1#2{$\m@th \setbox \z@ \hbox {$\braceld$}
\edef\@bracketheight{\the\ht\z@}\bracketend{#1}{#2}
\leaders \vrule \@height #1 \@depth \z@ \hfill
\leaders \vrule \@height #1 \@depth \z@ \hfill \bracketend{#1}{#2}$}
\def\bracketend#1#2{\vrule height #2 width #1\relax}
\makeatother

\begin{document}
\maketitle

\begin{abstract}
The \NP-complete \textsc{Permutation Pattern Matching} problem asks whether a permutation $P$ (the pattern) can be matched into a permutation $T$ (the text).
A matching is an order-preserving embedding of $P$ into $T$.
In the \textsc{Generalized Permutation Pattern Matching} problem one can additionally enforce that certain adjacent elements in the pattern must be mapped to adjacent elements in the text.
This paper studies the parameterized complexity of this more general problem.
We show \w{1}-completeness with respect to the length of the pattern $P$.
Under standard complexity theoretic assumptions this implies that no fixed-parameter tractable algorithm can be found for any parameter depending solely on $P$.
\end{abstract}

\section{Introduction}

\subsection{Permutation patterns}

The concept of pattern avoidance (and, closely related, pattern matching) in \pe s arose in the late 1960ies when Donald Knuth  asked which \pe s could be sorted using one stack in an example of his \textit{Fundamental algorithms} \cite{DBLP:books/aw/Knuth68}. The answer is simple: These are exactly the \pe s avoiding the pattern $231$ and they are counted by the Catalan numbers. By avoiding a certain pattern the following is meant:

\begin{Def}
Let $P$ be a permutation of length $k \leq n$ (the pattern). We say that the $n$-permutation $T$ (the text) \emph{contains $P$ as a pattern} or that \emph{$P$ can be matched into $T$} if we can find a subsequence of $T$ that is order-isomorphic to $P$. 
If there is no such subsequence we say that $T$ \emph{avoids the pattern $P$}.
Matching $P$ into $T$ thus consists in finding a monotone map $\mu: [k] \rightarrow [n]$ so that the sequence $\mu(P)$, being defined as  $\big(\mu(P(i))\big)_{i\in[k]}$, is a subsequence of $T$.
\label{Def.pattern.avoid}
\end{Def}
\begin{Ex}
The text $T=53142$ (being a permutation written in one-line representation) contains the pattern $312$, since the entries $512$ and also $534$ form a $312$-pattern. $T$ however avoids the pattern $123$ since it contains no increasing subsequence of length three. 
\end{Ex}
Since 1985, when Rodica Simion and Frank Schmidt published the first systematic study of \textit{Restricted Permutations} \cite{simion1985restricted}, the area of pattern avoidance in \pe s has become a rapidly growing field of discrete mathematics, more specifically of (enumerative) combinatorics.

There is a far-reaching generalization of pattern matching to so-called \emph{generalized patterns.}
These were introduced in \cite{babson2000generalized} and have since then received a lot of attention (see \cite{steingrımsson376generalized} for a survey).
In contrast to ``classical'' pattern matching, some elements of a generalized pattern may be forced to be mapped to adjacent elements in the text. Let us first give an example:
\begin{Ex}
As in the previous example we consider the text $T=53142$. It contains the generalized pattern $\langle 31\rangle 2$ as shown by the entries $534$.
Observe that the subsequence $\langle 31 \rangle $ of the pattern is mapped to the adjacent elements $53$ in the text.
This would not have been the case for $512$ and hence $512$ does not form a $\langle 31\rangle 2$-pattern.
$T$ avoids the pattern $\langle 312 \rangle$ since it contains no $312$-pattern where all three elements are adjacent.
\end{Ex}
\begin{Def}
A \emph{generalized pattern} $P$ of length $k \leq n$ is a permutation of length $k$ in which parentheses $\langle \rangle$ have been inserted.
These parentheses have to be non-nested, non-overlapping and well-matched, i.e. every parenthesis $\langle$ has to subsequently be followed by a $\rangle$.
Also each pair of parentheses has to contain at least two elements.
A subsequence of $P$ surrounded by a pair of parentheses is called \emph{adjacent block}.

We say that the $n$-permutation $T$ (the text) \emph{contains $P$ as a generalized pattern} or that \emph{$P$ can be matched into $T$} if we can find a subsequence of $T$ that is order-isomorphic to $P$ and in which adjacent blocks are mapped to adjacent elements in the text. 
Matching $P$ into $T$ thus consists in finding a monotone map $\mu: [k] \rightarrow [n]$ so that $\mu(P)$ is a subsequence of $T$ and $\mu(i)$ and $\mu(j)$ are adjacent whenever $i$ and $j$ lie next to each other in an adjacent block.
\label{Def.gen.pattern.avoid}
\end{Def}
We remark that this is not the usual notation which uses dashes instead of parentheses.
For instance $\langle 3 1 \rangle 2$ is usually written as $31-2$, $\langle 3 1 2 \rangle$ as $312$ and $312$ as $3-1-2$.
We prefer to use the parentheses notation since it allows to write classical patterns in the ``classical way''.

\subsection{Computational aspects}

This paper takes the viewpoint of computational complexity.
Computational aspects of pattern avoidance, in particular the analysis of the \textsc{Permutation Pattern Matching} problem, have so far received far less attention than enumerative questions. This problem is defined as follows:

\cprob {\probfont{(Generalized) Permutation Pattern Matching (\gppm/\ppm)}}
{A (generalized) pattern $P$ and a \pea $T$ (the text).}
{Is there a matching of $P$ into $T$?}
\ppm is also known as \probfont{Sub-Permutation} problem in the literature. In \cite{DBLP:conf/wads/BoseBL93} it was shown that the general decision problem is \NP-complete. 
In this paper we focus on the \probfont{Generalized Permutation Pattern 
Matching (\gppm)}.
To the best of our knowledge, so far no work has been done on \gppm.
However it is known to be \NP-complete, since \ppm clearly is a special case.

While knowing about the \NP-completeness of this problem, questions regarding its computational complexity remain unanswered.
A natural question would be ``Is the \probfont{Generalized Permutation Pattern Matching} problem still computationally hard when the pattern is short?''.
A framework able to provide an answer is parameterized complexity theory.
In contrast to classical complexity theory, parameterized complexity theory studies the complexity of problems with respect to several parameters and not just the input size.
Therefore an answer to aforementioned question could be found by performing a parameterized complexity analysis using the length $k$ of the pattern as a parameter.

A first step towards an algorithm for \gppm is a trivial $\bigO(n^k)$ brute-force algorithm.
However, even for a moderately long pattern this algorithm is unpractical.
In order to solve \gppm efficiently for short patterns, one could hope for a \emph{fixed-parameter tractable} (fpt) algorithm, i.e. having a runtime of $f(k)\cdot\mathit{poly}(n)$ with $f$ being a computable (ideally single-exponential) function.

In this paper we show that under standard complexity theoretic assumptions,  no such fpt algorithm exists.
This is achieved by showing that \gppm is \w{1}-complete with respect to the length of the pattern. This implies, unless $\fpt=\w{1}$, that there is no fpt algorithm for \gppm for any parameter that depends solely on the pattern.
Furthermore, it follows that an fpt algorithm may only be found if a parameter depending on the text is used.

\section{Related work}
Here we present some relevant work within the complexity analysis of \ppm. 
\begin{itemize}
\item \xp-algorithms, i.e. algorithms with runtime $\bigO(n^{f(k)})$, are given in \cite{DBLP:conf/isaac/AlbertAAH01, DBLP:journals/siamdm/AhalR08}.
\item Special interest has been shown to the case of \textit{separable} patterns. These are permutations that can be represented with the help of a \textit{separating tree} or, equivalently, permutations containing neither $3142$ nor $2413$ as a pattern. Note that every permutation avoiding any one of the patterns $132$, $231$, $213$ and $312$ is separable. Polynomial time algorithms for the decision and counting problems as well as efficient algorithms for the detection of separable permutations can be found in \cite{DBLP:conf/wads/BoseBL93, DBLP:journals/ipl/Ibarra97, DBLP:conf/isaac/AlbertAAH01, DBLP:journals/dam/YugandharS05}.

\item Other cases with restricted patterns have also been studied:
\begin{itemize} 
\item In case $P$ is the identity, \ppm consists of looking for an increasing subsequence of length $k$ in the text which is a special case of the \probfont{Longest Increasing Subsequence} problem. This problem can be solved in $\bigO(n \log n)$-time for sequences in general \cite{schensted1987longest} and in $\mathcal{O}(n \log \log n)$-time for permutations \cite{DBLP:journals/ipl/ChangW92, mäkinen2001longest}.
\item For all patterns of length four \ppm can be solved in $\bigO(n \log n)$-time \cite{DBLP:conf/isaac/AlbertAAH01}.
\item For the case that both the text and the pattern are $321$-avoiding  an $\bigO(k^2n^6)$-time algorithm is presented in \cite{DBLP:conf/isaac/GuillemotV09}. If only the pattern is $321$-avoiding an $\bigO(kn^{4\sqrt{k}+12})$-time algorithm was found.
\end{itemize}
\end{itemize}  
Two other problems related to \ppm are:
\begin{itemize}
\item The \probfont{Longest Common Pattern} problem is to find a longest common pattern between two permutations $T_1$ and $T_2$ i.e. a pattern $P$ of maximal length that can be matched both into $T_1$ and $T_2$. This problem is a generalization of \ppm  since determining whether the longest common pattern between $T_1$ and $T_2$ is $T_1$ is equivalent to \ppm. The \probfont{Longest Common Pattern} problem for the case that one of the two permutations $T_1$ and $T_2$ is separable has been studied e.g. in \cite{bouvel_longest_2006}.

\item For a class of permutations $X$ the  \probfont{Longest $X$-Subsequence} (LXS) problem is to identify in a given permutation $T$ its longest subsequence that is isomorphic to a permutation of $X$. 
In \cite{albert2003longest} polynomial time algorithms for many classes $X$ are described, in general however LXS is \NP-hard.
\end{itemize}
As mentioned before, we do not know of any work that has been done on \gppm.

\section{Preliminaries}

We give the relevant definitions of parameterized complexity theory. Details can be found in \cite{DowneyF99Book, FlumG2006parameterized}.
In contrast to classical complexity theory, a parameterized complexity analysis studies the runtime of an algorithm with respect to several parameters and not just the input size~$\card{I}$. Therefore every parameterized problem is a subset of $\Sigma^*\times \N$, where $\Sigma$ is the input alphabet. An instance of a parameterized problem consequently consists of an input string together with a positive integer $k$, the parameter. 
\begin{Def}
A parameterized problem $P$ is \emph{fixed-parameter tractable} (or in \fpt) if there is a computable function $f$ and a polynomial $p$ such that 
there is an algorithm solving $P$ in time $f(k)\cdot p(\card{I})$.
Such an algorithm is called fixed-parameter tractable as well.
\end{Def}
We continue by defining parameterized reductions.
\begin{Def}
Let $L_1,L_2\subseteq \Sigma^*\times \N$ be two parameterized problems.
An \emph{fpt-reduction} from $L_1$ to $L_2$ is a mapping $R : \Sigma^*\times \N \ra \Sigma^*\times \N$ such that
\begin{enumerate}
\item $(I, k) \in L_1$ iff $R(I, k) \in L_2$.
\item $R$ is fixed-parameter tractable.
\item There is a computable function g such that for $R(I, k) = (I' , k')$, $k' \leq g(k)$ holds.
\end{enumerate}
\end{Def}

\noindent \fpt is the parameterized equivalent of \ccfont{PTIME}. 
Other important complexity classes in the framework of parameterized complexity are those in the \textnormal{\ccfont{W}}-hierarchy, $\w{1}\subseteq\w{2}\subseteq\ldots$. 
For our purpose only the class \w{1} is relevant.
It is conjectured (and widely believed) that $\w{1}\neq \fpt$ (see e.g. \cite{DantchevMS11}). 
Therefore showing \w{1}-hardness can be considered as evidence that the problem is not fixed-parameter tractable.

\begin{Def}
The class $\w{1}$ is defined as the class of all problems that are fpt-reducible to the following problem.
\end{Def}
\pprob
{\probfont{Turing Machine Acceptance}}
{A nondeterministic Turing machine with its transition table,
an input word $x$ and a positive integer $k$.}
{$k$}
{Does the Turing machine accept the input $x$ in at most $k$ steps?}

\begin{Def}
A parameterized problem $P$ is in \xp if there is a computable function $f$ such that 
there is an algorithm solving $P$ in time $\bigO(\card{I}^{f(k)})$.
\end{Def}

\noindent All the aforementioned classes are closed under fpt-reductions. 
The following relations between these complexity classes are known:
\[
\fpt\subseteq\w{1}\subseteq\w{2}\subseteq\ldots\subseteq\xp
\]

\noindent Finally, $[k]$ denotes the set $\{1,\dots,k\}$.

\section{The \w{1}-completeness result}

\begin{Thm}
\gppm is \w{1}-complete with respect to length of the pattern.
\label{thm:w1_completeness}
\end{Thm}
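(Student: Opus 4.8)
The plan is to prove the two directions separately, using the \probfont{Clique} problem---which is \w{1}-complete with respect to the size of the clique---as the pivot on both sides.

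\emph{Membership in \w{1}.} First I would give an fpt-reduction from \gppm (parameter $k=|P|$) to \probfont{Clique} (parameter $=$ clique size); since \w{1} is closed under fpt-reductions and \probfont{Clique} lies in \w{1}, this places \gppm in \w{1}. The key observation is that a matching is fully determined by positions $p_1<\dots<p_k$ in $T$ and that every defining condition is \emph{binary}: the positions increase, the relative order is preserved ($P(i)<P(j)\Leftrightarrow T(p_i)<T(p_j)$ for all $i<j$), and block-consecutive pattern entries land on consecutive text positions ($p_{i+1}=p_i+1$). I would therefore build a $k$-partite graph $H$ with parts $V_1,\dots,V_k$, where $V_i=\{(i,p):p\in[n]\}$ records ``pattern position $i$ is matched to text position $p$'', and join $(i,p)$ to $(j,q)$ (for $i<j$) exactly when these two partial assignments are mutually consistent in the above sense. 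Because $H$ has no edges inside a part, any $k$-clique uses exactly one vertex per part, and such cliques are in bijection with matchings of $P$ into $T$. This is computable in polynomial time and preserves the parameter ($k'=k$), hence is an fpt-reduction.

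\emph{\w{1}-hardness.} For the converse I would reduce from \probfont{Multicolored Clique}, a standard \w{1}-hard variant with parameter $k$. Given a graph $G$ whose vertex set is partitioned into colour classes $V_1,\dots,V_k$, I would construct a text $T$ (a \pea of length polynomial in $|G|$) that encodes the graph through relative order, together with a generalized pattern $P$ of length $\bigO(k^2)$. The pattern would contain, for every pair $\{a,b\}$ of colour classes, a short \emph{edge-testing} gadget---an adjacent block---whose matches into $T$ correspond precisely to edges of $G$ between class $a$ and class $b$; the parentheses force the two endpoints read off by the pattern to sit inside a single edge gadget of $T$, so that a successful match certifies a genuine edge. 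A monotone ``scaffold'' of order constraints would tie the gadgets together so that the vertex selected for a class is the same across all $\binom{k}{2}$ tests it participates in. Correctness then amounts to showing that $P$ matches into $T$ iff the chosen endpoints form a clique with one vertex per class. Since $|P|=\bigO(k^2)$ depends only on $k$, this is an fpt-reduction.

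\emph{Main obstacle.} The delicate part is the hardness direction, specifically enforcing \emph{global consistency} with only \emph{local} tools. Permutation pattern matching sees the text only through relative order, and adjacent blocks constrain just the immediate neighbours of a few entries; yet I must guarantee that the same vertex is used for class $a$ in every edge test involving $a$, and that no ``cheating'' subsequence of $T$ yields a spurious match. Getting the scaffold and the edge gadgets to interlock so that consistency is forced---while keeping the pattern length a function of $k$ alone and proving the absence of unintended matches---is where the real work lies; the membership direction, by contrast, is essentially bookkeeping once one notices that all constraints are binary.
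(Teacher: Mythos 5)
Your membership argument is correct and takes a genuinely different route from the paper. The paper encodes \gppm as an instance of \probfont{Existential first-order model checking} with a formula of length $\bigO(k^2)$ and invokes the \w{1}-completeness of that problem; you instead exploit the same observation --- that all defining conditions of a matching are binary --- to build a $k$-partite consistency graph whose $k$-cliques are exactly the matchings, and reduce to \probfont{Clique}. Both are standard ways of packaging a binary constraint system, both preserve the parameter, and both are fine; your version is arguably more elementary, the paper's is more declarative. Do make explicit that the ``mutually consistent'' edge relation for $i<j$ must require $p<q$, $T_<(p,q)\Leftrightarrow P(i)<P(j)$, and, when $j=i+1$ lies adjacent to $i$ in a block, $q=p+1$; with no edges inside parts this gives the claimed bijection.

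The hardness direction, however, has a genuine gap: you describe the shape of a reduction from \probfont{Multicolored Clique} but never construct the gadgets, and the part you defer (``enforcing global consistency with only local tools'') is precisely the entire content of the proof. Concretely, three things are missing. First, the mechanism that forces the \emph{same} vertex of colour class $a$ to be used in all $k-1$ edge tests involving $a$: the paper achieves this by letting the vertex-selection prefix $\dot P = 12\cdots k$ and every edge block $\langle iji\rangle$ reuse the \emph{same value} $i$, which is only possible because it first works with multiset permutations and then removes repetitions by expanding each value $j$ into a real interval $[j,j+0.9]$ (with the pair $(j+0.9,j)$ in the prefix and increasing representatives in the blocks) --- and Claim 2 of the paper, showing that intervals must map to intervals, is exactly the global-consistency argument you identify as the obstacle. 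Second, the mechanism preventing an edge block from matching across the boundary between the vertex-encoding and edge-encoding parts of $T$: the paper inserts a separator block $\langle P_{\max}{+}3,P_{\max}{+}2,P_{\max}{+}1,P_{\max}{+}4\rangle$ whose unique occurrence in $T$ pins the two halves together. Third, the encoding of edges itself: the paper lists each (non-)edge $\{i,j\}$ as an adjacent triple $iji$, and the triple shape (not a pair) is what makes a block $\langle iji\rangle$ match only a genuine listed triple rather than two unrelated consecutive entries. Your plan is compatible with all of this --- reducing from \probfont{Clique}/\probfont{Multicolored Clique} and listing the edges of $G$ in $\bar T$ works just as well as the paper's reduction from \probfont{Independent Set} listing the non-edges --- but until these gadgets are specified and the ``no cheating'' claim is proved, the hardness half is a proof plan rather than a proof.
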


\begin{proof} 

We show W[1]-hardness by giving an fpt-reduction from the following problem to \gppm:

\pprob{\textsc{Independent Set}} {A graph $G=(V, E)$ and a positive integer $k$.}{$k$}{Is there a subset $S \subseteq V$ of size $k$, so that the induced subgraph $G[S]$ contains no edges?}

Let $(G,k)$ be an \textsc{Independent Set} instance, where $V=\left\lbrace v_1, v_2, \ldots v_l \right\rbrace $ is the set of vertices and $E=\left\lbrace e_1, e_2, \ldots, e_m \right\rbrace $ the set of edges. We are going to construct a \gppm instance $(P, T)$. We shall first construct a pair $(P', T')$. $P'$ is a generalized multiset pattern, i.e.\ a generalized pattern in which elements may occur more than once. $T'$ is a \pea on a \mul. Applying Definition~\ref{Def.gen.pattern.avoid} to \pe s on \mul s means that in a matching repeated elements in the pattern have to be mapped to repeated elements in the text. Afterwards we deal with the repeated elements in order to create a generalized pattern and a \pea on ordinary sets and hereby obtain $(P, T)$. 

Both the pattern and the text consist of a substring coding vertices ($\dot P$ resp.\ $\dot T$) and a substring coding edges ($\bar P$ resp.\ $\bar T$). In between these two substrings we place a \emph{separator block} of constant length $c$ to ensure that $\dot P$ is matched into  $\dot T$ and $\bar P$ into $\bar T$. For the moment, we will simply write $\|$ for this separator block, indicating that the separator block in the pattern has to be mapped to the one in the text. Its construction shall be described later on.

We define the pattern to be 
\begin{align*}
P'  &=  \dot P \| \bar{P} =  1 2 3 \ldots k \| 	\langle 121\rangle \langle 131 \rangle \ldots \langle 1k1 \rangle \langle 232 \rangle \ldots \langle2k2 \rangle \ldots \langle(k-1)k(k-1)\rangle
\end{align*}
$\dot P$ corresponds to a list of (indices of) $k$ vertices. $\bar P$ represents all possible edges between the $k$ vertices (in lexicographic order). 
An edge from $i$ to $j$ is encoded by $\langle iji \rangle$. 
Since the complete graph on $k$ vertices has ${k(k-1)}/{2}$ edges the total length of $P$ is equal to $k+c+3k(k-1)/2= c+(3k^2-k)/2$. 
\\
For the text $T'  = \dot T \| \bar T$ we proceed similarly. $\dot T$ is a list of the (indices of the) $l$ vertices of $G$. $\bar T$ represents all edges \textit{not occurring} in $G$ listed in lexicographic order. An edge $\{i,j\}$ is again encoded by $iji$. Let us give an example:

\begin{Exe*}
Let $l=6$ and $k=3$. Then the pattern is given by \[P'=123 \| \langle121\rangle \langle131\rangle \langle232\rangle\]
Consider for instance the graph $G$ with six vertices $v_1, \ldots , v_6$ and edge-set 
\[\left\lbrace 
 \left\lbrace 1, 3\right\rbrace ,   \left\lbrace 1, 4\right\rbrace ,  \left\lbrace 1, 5\right\rbrace ,  \left\lbrace 2, 6\right\rbrace , \left\lbrace 3, 4\right\rbrace , \left\lbrace 3, 6\right\rbrace , \left\lbrace 5, 6\right\rbrace \right\rbrace \]
represented in Figure \ref{example_w1_proof} (we write $\left\lbrace i, j\right\rbrace $ instead of $\left\lbrace v_i, v_j\right\rbrace $). Then the eight edges not appearing in $G$ are 
\[\left\lbrace 
 \left\lbrace 1, 2\right\rbrace ,   \left\lbrace 1, 6\right\rbrace , \left\lbrace 2, 3\right\rbrace , \left\lbrace 2, 4\right\rbrace ,  \left\lbrace 2, 5\right\rbrace , \left\lbrace 3, 5\right\rbrace ,  \left\lbrace 4, 5\right\rbrace , \left\lbrace 4, 6\right\rbrace \right\rbrace. \]
The text is thus given by: \[T'= 123456 \| 121 \; 161 \; 232  \; 242 \;  252  \; 353  \; 454 \;  464. \]

\begin{figure}
\begin{center}
\begin{tikzpicture}
  \tikzstyle{gray_vertex}=[circle,draw=black,fill=black!25,minimum size=17pt,inner sep=0pt]
  \tikzstyle{red_vertex}=[circle,draw=black,fill=red,minimum size=17pt,inner sep=0pt]
  \tikzstyle{red_box}=[rectangle,
  fill=red,minimum size=12pt,inner sep=0pt]

\foreach \name/\angle/\text in { P-1/60/1, P-6/120/6,  P-4/240/4}
    \node[gray_vertex,xshift=6cm,yshift=.5cm] (\name) at (\angle:2cm) {$v_{\text}$};
\foreach \name/\angle/\text in {P-2/0/2, P-5/180/5,  P-3/300/3}
    \node[red_vertex,xshift=6cm,yshift=.5cm] (\name) at (\angle:2cm) {$v_{\text}$};

\foreach \from/\to in {1/3, 1/4, 1/5, 2/6, 3/4, 3/6, 5/6}
    { \draw[thick] (P-\from) -- (P-\to);  }

\node at (9.0, 1.25) {$P$:};
\foreach \x/\y/\name/\text in {10.75/1/1/1, 11.25/1/2/2, 11.75/1/3/3, 14.3/0.95/12/$\langle 121 \rangle$, 15.7/0.95/13/$\langle 131 \rangle$, 17.1/0.95/23/$\langle 	232 \rangle$}
    \node (A-\name) at (\x-0.5, \y+0.25) {\text};

\node at (9.0, -0.25) {$T$:};
\foreach \x/\y/\name in {10/0/1, 11.5/0/4,  12.5/0/6,  13.5/0/121, 14.2/0/161, 15.6/0/242, 17.7/0/454, 18.4/0/464}
    \node (B-\name) at (\x-0.5, \y-0.25) {\name};
\foreach \x/\y/\name in { 10.5/0/2, 11/0/3, 12/0/5, 14.9/0/232, 16.3/0/252, 17/0/353}
    \node[red_box] (B-\name) at (\x-0.5, \y-0.25) {\name};
    
\node at (12.5, 1.25) {$\|$};
\node at (12.5, -0.25) {$\|$};
    
 \foreach \from/\to in {1/2, 2/3, 3/5, 12/232, 13/252, 23/353}
    { \draw[shorten >=1pt, ->] (A-\from) -- (B-\to);  }
    
\end{tikzpicture}
\end{center}
\caption{An example for the reduction of an \probfont{Independent Set} instance to a \ppm instance.}
\label{example_w1_proof}
\end{figure}
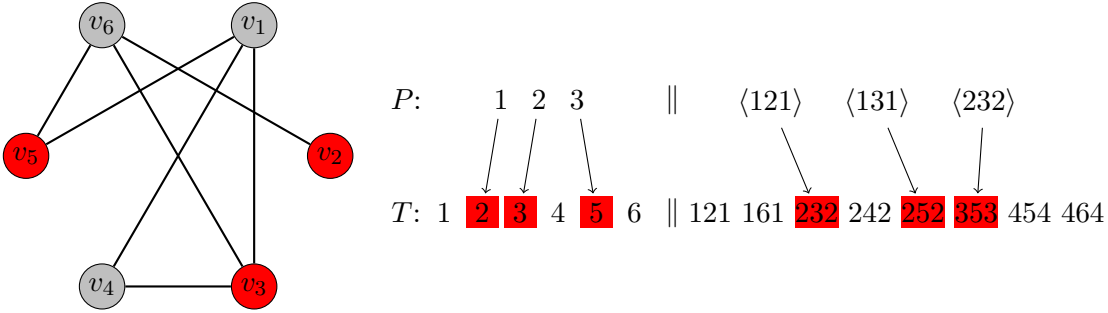
\end{Exe*}

\noindent Since there are at most ${l(l-1)}/{2}$ edges not appearing in $G$, the length of $T'$ is at the most $c+(3l^2-l)/2$. 

\begin{Cla}
An independent set of size $k$ can be found in $G$ if and only if there is a simultaneous matching of $\dot P$ into $\dot T$ and of $\bar P$ into $\bar T$.
\label{claim.simult.matching}
\end{Cla}

\begin{Exe*}[continuation]
In our example $\left\lbrace v_2, v_3, v_5 \right\rbrace $ is an independent set of size three. Indeed, the pattern $P'$ can be matched into $T'$ as can be seen by matching the elements $1, 2 \text{ and } 3$ onto $2, 3 \text{ and } 5$ respectively. See again Figure \ref{example_w1_proof} where the involved vertices respectively elements of the text have been marked in red. 
\end{Exe*}

\begin{proof}[Proof of Claim \ref{claim.simult.matching}]
A matching of $\dot P$ into $\dot T$ corresponds to a selection of $k$ vertices amongst the $l$ vertices of $G$. If it is possible to additionally match $\bar P$ into $\bar T$ this means that there are no edges between the selected vertices. 
Observe that a matching of the generalized pattern $\bar P$ into $\bar T$ consists of mapping each edge between selected vertices to an edge not appearing in $G$. This is because every adjacent triple of the form $iji$ (with $i<j$) in $\bar T$ corresponds to one of the non-edges of the graph.
The selected $k$ vertices thus form an independent set in $G$. 
Conversely, if for every possible matching of $\dot P$ into $\dot T$ defined by a monotone map $\mu: [k] \rightarrow [l] $ some $\langle xyx \rangle$ in $\bar P$ cannot be matched  into $\bar T$, this means that $\left\lbrace \mu(x), \mu(y)\right\rbrace $ is one of the edges of $G$ since $\mu(x) \mu(y) \mu(x)$ does not appear as adjacent elements in $\bar T$. Thus, for every selection of $k$ vertices there will always be at least one edge connecting two of them and therefore there is no independent set of size $k$ in $G$. 
\end{proof}

In order to get rid of repeated elements, we identify every variable with a real interval: $1$ corresponds to the interval $[1,1.9]$, $2$ to $[2,2.9]$ and so on until finally $k$ corresponds to $[k,k+0.9]$ (resp.\ $l$ to $[l,l+0.9]$). In $\dot P$ and $\dot T$ we shall therefore replace every element $j$ by the pair of elements $(j+0.9,j)$ (in this order). The occurrences of $j$ in $\bar P$ (resp.\ $\bar T$) shall then successively be replaced by real numbers in the interval $[j, j+0.9]$. For every $j$, these values are chosen one after the other (from left to right), always picking a real number that is larger than all the previously chosen ones in the interval $[j, j+0.9]$. Note that this construction changes the length of the pattern (resp.\ the text) by $k$ (resp.\ $l$) since the length of $\dot P$ (resp.\ $\dot T$) is doubled. We now have $|P|=c+(3k^2+k)/2$ and $|T| \leq c + (3l^2+l)/2$.

Observe the following: The obtained sequence is not a \pea in the classical sense since it consists of real numbers. However, by replacing the smallest number by 1, the second smallest by 2 and so on, we do obtain an ordinary \pe . This defines $P$ and $T$ (except for the seperator block). 

\begin{Exe*}[continuation]
Getting rid of repetitions in the pattern of the above example could for instance be done in the following way:
\[P= 1.9 \; 1 \; 2.9 \;  2 \; 3.9 \;  3 \| \langle 1.1 \;  2.1 \; 1.2 \rangle \langle 1.3\;  3.1 \; 1.4 \rangle \langle 2.2 \; 3.2 \; 2.3\rangle \]
This \pea of real numbers is order-isomorphic to the following ordinary \pe :
\[P=61\text{B}7\text{F}\text{C} \| \langle283\rangle  \langle4\text{D}5\rangle \langle9\text{E}\text{A}\rangle\]
For increased legibility, we use letters A, B, C,... with lexicographic order for numbers larger than $9$, i.e. $\text{A}=10, \text{B}=11,$ and so on. 
\end{Exe*}

\begin{Cla}
$P$ can be matched into $T$ iff $P'$ can be matched into $T'$.
\label{claim.repetitions}
\end{Cla}

\begin{proof}[Proof of Claim \ref{claim.repetitions}]
Suppose that $P'$ can be matched into $T'$. When matching $P$ into $T$, we have to make sure that elements in $P$ that were copies of some repeated element in $P'$ may still be mapped to elements in $T$ that were copies themselves in $T'$. Indeed this is possible since we have chosen the real numbers replacing repeated elements in increasing order. If $i$ in $P'$ was matched to $j$ in $T'$, then the pair $(i+0.9,i)$ in $P$ may be matched to the pair $(j+0.9,j)$ in $T$ and the increasing sequence of elements in the interval $[i,i+0.9]$ may be matched into the increasing sequence of elements in the interval $[j,j+0.9]$.

Now suppose that $P$ can be matched into $T$. In order to prove that this implies that $P'$ can be matched into $T'$, we merely need to show that elements in $P$ that were copies of some repeated element in $P'$ have to be mapped to elements in $T$ that were copies themselves in $T'$. Then returning to repeated elements clearly preserves the matching.
Firstly, it is clear that a pair of consecutive elements $i+0.9$ and $i$ in $\dot P$ has to be matched to some pair of consecutive elements $j+0.9$ and $j$ in $\dot T$, since $j$ is the only element smaller than $j+0.9$ and appearing to its right. Thus intervals are matched to intervals. Secondly, an element $x$ in $P$ for which it holds that $i < x < i +0.9$ must be matched to an element $y$ in $T$ for which it holds that $j < y < j +0.9$. Thus copies of an element are still matched to copies of some other element. 

Finally, replacing real numbers by integers clearly does not change the permutations in any relevant way.
\end{proof}

We now have to implement the separator block in order to ensure that the two substrings $\dot P$ and $\bar P$ of the pattern are matched into the corresponding substrings $\dot T$ resp. $\bar T$ in $T$. 
In $P$ we proceed in the following way. Let us denote its largest element by $P_{\max}$ (this is the next-to-last element of $\dot P$ and was denoted by $(k+0.9)$ in $P'$). Then we define the separator block to be the generalized pattern $\langle P_{\max}+3, P_{\max}+2, P_{\max} +1, P_{\max}+4 \rangle$. This leads to $P=\dot P  \langle P_{\max}+3, P_{\max}+2, P_{\max} +1, P_{\max}+4 \rangle \bar{P} $. For $T$ we proceed similarly and obtain $T=\dot T  (T_{\max}+3) (T_{\max}+2)(T_{\max} +1) (T_{\max}+4) \bar{T}$. We thus have $c=4$, i.e.\ the separator block has length $4$.

\begin{Exe*}[continuation]
The largest element of $P$ is $\text{F}=15$, thus we will use the following three elements as separator block: G, H, I and J. Inserting these elements as described above leads to:
\[P=61\text{B}7\text{F}\text{C} \langle \text{IHGJ} \rangle \langle283\rangle  \langle4\text{D}5\rangle \langle9\text{E}\text{A}\rangle\]
\end{Exe*}

\begin{Cla}
Constructing the separator block in the described way guarantees that in a matching of $P$ into $T$, $\dot P$ is matched into $\dot T$ and $\bar P$ is matched into $\bar T$.
\label{claim.guards}
\end{Cla}

\begin{proof}[Proof of Claim \ref{claim.guards}]
The only $\langle 3214 \rangle$-pattern that can be found in $T$, i.e. the only decreasing subsequence consisting of three adjacent elements followed by a larger element, is formed by the elements $(T_{\max}+3) (T_{\max}+2)(T_{\max} +1) (T_{\max}+4)$. Thus the separator block in $P$ must be mapped to the one in $T$. Consequently the elements lying to its left (these are exactly the elements of $\dot P$) must be matched into the elements lying to the left of the separator block in $T$, i.e. into $\dot T$. For the same reason $\bar P$ must be matched into $\bar T$.
 \end{proof}

\noindent This finally yields that $(G,k)$ is a YES-instance of \probfont{Independet Set} if and only if $(P,T)$ is a YES-instance of \gppm. It remains to show that this reduction can be done in fpt-time. We have already remarked that the length of the pattern is equal to $4+(3k^2+k)/2$ and the length of the text is at the most $4+(3l^2+l)/2$. Thus $|P|=\mathcal{O}(k^2)$ and $|T|=\mathcal{O}(l^2)$.

\medskip
For showing membership we encode \gppm as a model checking problem of an existential first order formula.
\w{1}-membership is then a consequence of the fact that the following problem is \w{1}-complete~\cite{FlumG05}.
\pprob
{\probfont{Existential first-order model checking}}
{A structure $\cA$ and an existential first-order formula $\varphi$}
{$\card{\varphi}$}
{Is $\cA$ a model for $\varphi$?}
\noindent Let $k=\card{P}$.
We compute a structure $\cA=(A,<,T_<,S)$, where the domain set $A=\{1,\ldots,n\}$ represents indices in the text. 
$T_<$ is a binary relation where $T_<(x,y)$ is true iff the element on the $x$-th position in $T$ is less than the element on position $y$.
$S$ is a binary relation where $S(x,y)$ is true iff $y$ is the successor of $x$, i.e. $x+1=y$. 
$T_<$, $S$ and $<$ can be computed in polynomial time. The formula $\varphi$ we want to check is
\begin{align*}
\varphi  = & \exists x_1 \ldots \exists x_k \quad {x_1<x_2} \ \wedge \   {x_2<x_3} \  \wedge \  \ldots  \  \wedge \   x_{k-1}<x_k \ \wedge\\  & \bigwedge_{\substack{P(i)<P(j) \\ \text{for }i,j\in [k]}} T_<(x_i,x_j) \wedge \bigwedge_{\substack{P(i)>P(j) \\ \text{for }i,j\in [k]}} \neg T_<(x_i,x_j) \wedge \bigwedge_{\substack{AdjToRight(i)\\\text{for }i\in[k-1]}} S(x_i,x_{i+1}),
\end{align*}
where $AdjToRight(i)$ is true iff $P(i)P(i+1)$ is contained in an adjacent block. 
Observe that the length of $\varphi$ is in $\bigO(k^2)$. The correctness of the reduction follows directly from Definition~\ref{Def.gen.pattern.avoid}. Indeed, the fact that $P$ can be matched into $T$ means that indices $x_1, \ldots, x_k $ where $x_1< \ldots < x_k$ can be found so that $P(i)<P(j)$ iff $T(x_i)<T(x_j)$ and so that $x_i+1=x_{i+1}$ holds whenever $AdjToRight(i)$ is true.
\end{proof}

\begin{Rem}
Since in the fpt-reduction the length of the pattern can be bounded by a polynomial in the size of $G$, this is also a polynomial time reduction. 
Therefore the proof of Theorem~\ref{thm:w1_completeness} can also be seen as an alternative way of showing \NP-hardness for \gppm. This also follows from the \NP-hardness of the less general \ppm.
\end{Rem}

\section{Conclusion}

We have shown that the \probfont{Generalized Permutation Pattern Matching} problem is \w{1}-complete with respect to the length of the pattern.
This implies that under standard complexity theoretic assumptions there is no fpt-al\-go\-ri\-thm for this problem with respect to the length of the pattern.
Furthermore, this also yields that \emph{no} parameter that is a function of the pattern can yield fixed-parameter tractability.
In order to obtain fpt results, we plan to study parameters concerning the text.
The question whether \ppm is also \w{1}-hard remains open.

\section{Acknowledgements}
We would like to thank Marek Cygan, Marcin Pilipczuk and Ond\v{r}ej Such\'{y} for pointing out a flaw in the proof of a previous version of this paper.

\bibliographystyle{plain}
\bibliography{lit}

\end{document}